\def\doctitle{Modular Analysis of Tree-Topology Models}
\DeclareMathAlphabet{\mathpzc}{OT1}{pzc}{m}{it}
\preto\tabular{\setcounter{magicrownumbers}{0}}
\newcounter{magicrownumbers}
\newcommand{\marginX}{\marginnote{\huge{\quad\textbf{!}\quad}}}
  \newcommand{\mk}[1]{\textcolor{magenta}{\marginX{}[\textbf{Michał}: #1]}}
  \newcommand{\lp}[1]{\textcolor{green!70!black}{\marginX{}[\textbf{Laure}: #1 ]}}
  \newcommand{\mk}[1]{}
  \newcommand{\lp}[1]{}
\newcommand{\ie}{i.e.\xspace}
\def\ADT/{ADTree}
\def\AMAS/{AMAS}
\def\EAMAS/{EAMAS}
\def\IIS/{IIS}
\def\EIIS/{EIIS}
\def\MAS/{MAS}
\def\POR/{POR}
\newcommand{\receive}{\ensuremath{?}}
\def\resp/{resp.}
\newcommand{\send}{\ensuremath{!}}
\newcommand{\styleSync}[1]{\textcolor{blue}{\ensuremath{#1}}}
\newcommand{\imitator}{\MakeUppercase{imitator}}
\newcommand{\spin}{\textsc{spin}\xspace}
\newcommand{\uppaal}{\textsc{Uppaal}\xspace}
\renewcommand{\paragraph}{\@startsection{paragraph}{4}{0pt}%
  {.8ex plus 0.2ex minus 0.2ex}%
  {-0.5em}%
  {\bfseries}}
\newcommand{\CTL}{\mathit{CTL}}
\newcommand{\lts}{\ensuremath{\mathcal{L}\mathcal{T}\mathcal{S}}}
\newcommand{\snt}{\ensuremath{\mathcal{S}\mathcal{T}}\xspace}
\newcommand{\model}{\mathcal{M}}
\newcommand{\run}{\rho}
\newcommand{\runs}{\mathit{Runs}}
\newcommand{\aprod}{||}
\newcommand{\states}{\mathcal{S}}
\newcommand{\loc}{\ensuremath{s}}
\newcommand{\rtloc}{\ensuremath{r}}
\newcommand{\lloc}{\ensuremath{s}}
\newcommand{\rloc}{\ensuremath{t}}
\newcommand{\state}{\mathit{s}}
\newcommand{\initst}{\mathit{s}^0}
\newcommand{\initstsq}{\initst_{sq}}
\newcommand{\transrelg}{\rightarrow}
\newcommand{\raisemath}[1]{\mathpalette{\raisem@th{#1}}}
\newcommand{\raisem@th}[3]{\raisebox{#1}{$#2#3$}}
\newcommand{\transc}[1]{\xrightarrow{#1}}
\newcommand{\transcr}[1]{\xrightarrow{#1}_{sq}}
\newcommand{\transci}[1]{\xrightarrow{#1}_{i}}
\newcommand{\transcrt}[1]{\xrightarrow{#1}_{\exroot}}
\newcommand{\acts}{\mathit{Acts}}
\newcommand{\downacts}{\mathit{downacts}}
\newcommand{\upacts}{\mathit{upacts}}
\newcommand{\getacts}{\mathit{acts}}
\newcommand{\getstates}{\mathit{states}}
\newcommand{\snd}{\mathit{snd}}
\newcommand{\locacts}{\mathit{locacts}}
\newcommand{\act}{\mathit{act}}
\newcommand{\props}{{\mathcal{P}\mathcal{V}}}
\newcommand{\labeling}{\mathcal{L}}
\newcommand{\syncn}{\mathcal{G}}
\newcommand{\sq}{\mathcal{S}\mathcal{Q}}
\newcommand{\cmpl}{\mathit{cmpl}}
\newcommand{\synce}{\mathcal{T}}
\newcommand{\nats}{\mathbb{N}}
\newcommand{\scomp}{\mathit{mods}}
\newcommand{\roottr}{\mathit{root}}
\newcommand{\prnt}{\mathit{parent}}
\newcommand{\chld}{\mathit{children}}
\newcommand{\exroot}{\mathcal{R}}
\newcommand{\excomp}{\mathcal{M}}
\newcommand{\exleft}{\excomp_1}
\newcommand{\exright}{\excomp_2}
\newcommand{\exrootn}{\mathcal{R}^y}
\newcommand{\exleftn}{\excomp^y_1}
\newcommand{\exrightn}{\excomp^y_2}
\newcommand{\actopen}{\mathit{open}}
\newcommand{\actidle}{\mathit{beep}}
\newcommand{\actsil}{\tau}
\newcommand{\actchooseL}{\mathit{chooseL}}
\newcommand{\actchooseR}{\mathit{chooseR}}
\newcommand{\projdown}[1]{{#1}\!\downarrow}
\newcommand{\reduce}{\ensuremath{\mathit{reduceNet}}}
\newcommand{\reducedchldn}{\mathit{redChldn}}
\newcommand{\chldalg}{\mathit{chld}}
\newcommand{\tnet}{\mathit{Net}}
\colorlet{leftCol}{orange!70}
\colorlet{rightCol}{purple!70}
\colorlet{rootCol}{green!70!black}
\tikzstyle{every node}=[initial text=]
\tikzstyle{final}=[double]
\tikzstyle{sync}=[draw=blue,thick]
\tikzstyle{nosync}=[draw=black,thick]
\tikzstyle{outersync}=[draw=purple,thick]
\tikzstyle{seq}=[path picture={
 \def\@rowc@lors{\noalign{%
  \ifodd\xrownum
  \global\advance\rownum\@ne
  \fi
  \global\advance\xrownum\@ne
  }\@rowcolors}
\newcolumntype{x}[1]{>{\raggedleft\hspace{0pt}}p{#1}}
\title{%
Modular Analysis of Tree-Topology Models
\thanks{This work was supported by the PICS CNRS/PAN project PARTIES and by PAN.
  Michał Knapik is supported by POLLUX VoteVerif project.} }
\author{
	Laure Petrucci\inst{1}
	\and
        Micha\l{} Knapik\inst{2}
}
\institute{LIPN, CNRS UMR 7030, Universite Sorbonne Paris Nord,
                99 av. J-B. Clément, 93430
		Villetaneuse, France
                \email{laure.petrucci@lipn.univ-paris13.fr}
	\and
        {Institute of Computer Science,
        Polish Academy of Sciences,
                Jana Kazimierza 5, 01-248
		Warsaw, Poland
                \email{m.knapik@ipipan.waw.pl}
        }
}
\begin{document}

\maketitle

\thispagestyle{plain}
\pagestyle{plain}

\begin{abstract}
We investigate networks of automata that synchronise over common
action labels. A graph synchronisation topology between the automata
is defined in such a way that two automata are connected iff
they can synchronise over an action. We show a very effective
reduction of networks of automata with tree-like synchronisation
topologies. The reduction preserves a certain form of reachability,
but not safety. The procedure is implemented in an open-source
tool.
\end{abstract}

\section{Introduction}
\label{sec:introduction}

Networks of various flavours of finite automata are the usual choice of formalism
when modeling complex systems such as protocols. This approach also plays well with the
divide-and-conquer paradigm, as the investigated system can be divided into components
modeled with various degree of granularity. However, the cost of computing the synchronised
product of these submodules can be prohibitive: in practice the size of the statespace
grows exponentially with the number of components.

In this paper we tackle the problem of computing of a part of the statespace
of the entire synchronised product in such a way that a certain version of reachability
is preserved. At this stage we only deal with systems that exhibit tree-like
synchronisation structure and consist of live-reset automata. Namely, each component
can synchronise via shared \emph{upstream} actions with a single other module (its \emph{parent})
after which it \emph{resets}, \ie{} returns to the initial state.
We propose a bottom-up reduction based on the observation that any execution of the
entire system can be rewritten in a reachability-preserving way into a sequence of interactions between
components and their parents followed by upstream synchronisations. Thus, the reduced model is
constructed by creating synchronised products of pairs consisting of a component and its parent.
The size of the statespace of the resulting automaton is much smaller than the product
of the entire network.

The theory has been implemented in an open-source tool~\cite{LTR}.

\section{Tree Synchronisation Systems}
\label{sec:tss}

In this section we recall the basic notions of networks of Labelled Transition Systems and their
synchronisation topologies. We also introduce and explain the restrictions on the models assumed in this
paper.
In what follows let $\props$ denote the set of propositions.

\begin{definition}[Labelled Transition System]
A \emph{Labelled Transition System} ($\lts$) is a tuple
$\model = \langle \states, \initst, \acts, \transrelg, \labeling\rangle$ where:
\begin{enumerate}
\item $\states$ is a finite set of states and $\initst\in\states$ the initial state;
\item $\acts$ is a finite set of action names;
\item $\transrelg\; \subseteq \states\times\acts\times\times \states$ is a transition relation;
\item $\labeling\colon\states\to 2^\props$ assigns to each state a set of propositions
that hold therein.
\end{enumerate}
\end{definition}
\noindent
We usually write $s\transc{\act}s'$ instead of $(s,\act,s')\in\transrelg$. 
We also denote $\getacts(\model) = \acts$ and $\getstates(\model) = \states$. 
A \emph{run} in $\lts$ $\model$ is an infinite sequence of states and actions $\run = \loc^0\act^0\loc^1\act^1\ldots$
s.t. $\loc^i\transc{\act_i}\loc^{i+1}$ for all $i\ge 0$.
By $\runs(\model, \loc)$ we denote the set of all the runs starting from state $s\in\states$;
if $\loc$ is the initial state, we simply write $\runs(\model)$.

\subsection{$\lts$ Nets and Synchronisation Topologies}
\label{subsec:automata_networks}

Both the commercial and research model checkers such as \spin{}, \uppaal{} or \imitator{}~\cite{HolzmannSPIN,BehrmannDLHPYH06,AndreFKS12}
typically expect the systems
described in a form of interacting modules. Concurrent transitions via common actions (or \emph{channels})
are one of the most basic synchronisation primitives~\cite{PrinciplesofMC}.

\begin{definition}[Asynchronous Product]
  Let $\model_i = \langle \states_i, \initst_i, \transrelg_i, \acts_i, \labeling_i\rangle$ be
  $\lts$, for $i\in\{1,2\}$.
  The \emph{asynchronous product} of $\model_1$ and $\model_2$ is the $\lts$
  $\model_1\aprod\model_2 = \langle \states_1\times\states_2, (\initst_1,\initst_2), \transrelg, \acts_1\cup\acts_2, \labeling_1\cup\labeling_2 \rangle$ with
  the transition rule defined in the usual way:
\vspace{-1mm}
\begin{align*}
&\infer{%
  (\state_1, \state_2)\transc{\act}{}
  (\state_1', \state_2)
} {%
  \act\in\acts_1\setminus\acts_2 \land \state_1\transc{\act}_{1}\state_1'
}\\
&\infer{%
  (\state_1, \state_2)\transc{\act}
  (\state_1, \state_2')
} {%
  \act\in\acts_2\setminus\acts_1 \land \state_2\transc{\act}_{2}\state_2'
}\\
&\infer{%
  (\state_1, \state_2)\transc{\act}
  (\state_1', \state_2')
} {
  \act\in\acts_1 \cap\acts_2 \land \state_1\transc{\act}_{1}\state_1'
  \land \state_2\transc{\act}_{2}\state_2'
}
\end{align*}
\vspace{-1mm}
\end{definition}
\noindent
The above definition is naturally extended to an arbitrary number of components, where
we sometimes write $\aprod_{i=0}^n \model_i$ instead of $\model_1\aprod\ldots\aprod\model_n$.

The synchronisation topology is an undirected graph that records how components synchronise with one another.

\begin{definition}[Synchronisation Topology]
A \emph{synchronisation topology} (\snt) is a tuple
$\syncn = \langle \tnet, \synce\rangle$,
where $\tnet = \{\model_i\}_{i=1}^n$ is a set of $\lts$ for $i\in\{1,\ldots, n\}$, 
and $\synce\subseteq\tnet\times\tnet$ is s.t.
$(\model_i,\model_j) \in \synce$
iff $i\ne j$ and $\acts_i\cap\acts_j\ne\emptyset$.
\end{definition}

Note that $\synce$ is induced by $\tnet$. Thus, with a slight notational
abuse we sometimes treat $\syncn$ as $\tnet$.
Moreover, we put $\getacts(\syncn) = \bigcup_{i=1}^n\getacts(\model_i)$.

In what follows
we assume that $\syncn$ is a tree with the root $\roottr(\syncn)$. Moreover, for each $\model\in\tnet$
by $\prnt(\model)$ we denote its parent (we assume $\prnt(\roottr(\syncn)) = \emptyset$)
and by $\chld(\model)$ we mean the set of its children. 
By $\upacts(\model)$ (resp., $\downacts(\model)$) we denote the set of actions via which
$\model$ synchronises with its parent (children, resp.).
For each $\act\in\downacts(\model)$ by $\snd(\model,\act)$ we denote the component
$\model'\in\chld(\model)$ s.t. $\act\in\upacts(\model')$. Thus, $\snd(\model,\act)$
is the child of $\model$ that synchronises with $\model$ over $\act$.
If $\model$ is clear from the context, we simply write $\snd(\act)$.
The \emph{local}, unsynchronised actions of $\model$ are defined as
$\locacts(\model) = \getacts(\model)\setminus (\downacts(\model) \cup \upacts(\model))$.
For brevity, whenever we refer to a state or transition of $\syncn$
we mean a state or transition of $\aprod_{i=0}^n \model_i$.
We also extend the notion of runs to synchronisation topologies:
$\runs(\syncn, \loc) = \runs(\aprod_{i=0}^n \model_i, \loc)$ for each
$\loc\in\getstates(\aprod_{i=0}^n \model_i)$.

We are interested in networks whose all components share a similar,
simple structure. Namely, we say that an $\lts$ $\model$ is \emph{live-reset} if every run $\run\in\runs(\model)$
is s.t. executing any action from $\upacts(\model)$ leads to the initial state.
Intuitively, $\model$ can freely synchronise with its children and execute local actions
but resets once synchronising with the parent.
If every component of an $\snt$ $\syncn$ is live-reset then we say that
$\syncn$ is live-reset.

\begin{example}\label{example:simpleltsnet}
  \begin{figure}[t]
    \centering

\begin{tikzpicture}
  [every state/.style={thick,draw,minimum size=6mm}, >=stealth',node distance=1.8cm,->, transform shape, scale = 0.73]

  \def\rootright{6.5}
  \def\roottop{2.0}
  \def\rootbot{-1.6}  
  \def\rootleft{-3.0}  
  \draw[rounded corners, fill=green!15] (\rootleft, \rootbot) rectangle (\rootright, \roottop) {};

  \def\lmodright{-1.0}
  \def\lmodtop{-2.3}
  \def\lmodbot{-4.8}  
  \def\lmodleft{-3.0}  
  \draw[rounded corners, fill=orange!15] (\lmodleft, \lmodtop) rectangle (\lmodright, \lmodbot) {};

  \def\rmodright{8.0}
  \def\rmodtop{-2.3}
  \def\rmodbot{-4.8}  
  \def\rmodleft{1.0}  
  \draw[rounded corners, fill=purple!15] (\rmodleft, \rmodtop) rectangle (\rmodright, \rmodbot) {};

  \draw[-,red,thick,dashed] (0.5,\rootbot) -- (-2.0, \rmodtop);
  \draw[-,red,thick,dashed] (0.5,\rootbot) -- (4.5, \rmodtop);
  
  \node[state,initial] at (\rootleft+1,0) (l0) {$\rtloc_0$};
  \node[state] at (\rootleft+3.5,1) (l1) {$\rtloc_1$};
  \node[state] at (\rootleft+6.0,1) (l2) {$\rtloc_2$};
  \node[state] at (\rootright-1,0) (l3) {$\rtloc_3$};
  \node[state] at ({(\rootleft+\rootright)/2},-1) (l4) {$\rtloc_4$};  

  \draw[sync] (l0) to [bend left] node[above]
       {$\styleSync{\receive\actopen}$} node[below] {} (l1);
  \draw[sync] (l1) to [bend left] node[above]
       {$\styleSync{\receive\actchooseL}$} (l2);
  \draw[sync] (l2) to [bend left] node[above]
       {$\styleSync{\receive\actopen}$} (l3);
  \draw[sync] (l1) to [bend left] node[right]
       {$\styleSync{\receive\actchooseR}$} (l4);
  \draw[sync] (l4) to [bend left] node[above=2mm]
       {$\styleSync{\receive\actchooseL}$} (l0);              
  \draw[nosync] (l3) to [loop above] node[above] {$\actidle$} (l3);
  \node[label] at (\rootright-2.5,\rootbot+0.5) (t2) {$\exroot$};

  \node[state,initial] at (\lmodleft+1,\lmodtop-1.4) (l10) {$\lloc_0$};
  \draw[sync] (l10) to [loop above] node[above]
       {$\styleSync{\send\actopen}$} (l10);                 
  \node[label] at (-2.0,\rmodtop-2.2) (t2) {$\exleft$};       

  \node[state,initial] at (\rmodleft+1,\rmodtop-1.2) (l20) {$\rloc_0$};
  \node[state] at (4.5,\rmodtop-1.2) (l21) {$\rloc_1$};
  \node[state] at (7.0,\rmodtop-1.2) (l22) {$\rloc_2$};    
  \draw[sync] (l21) to [bend right] node[above]
       {$\styleSync{\send\actchooseL}$} (l20);
  \draw[nosync] (l21) to node[below] {$\actsil$} (l22);
  \draw[nosync] (l20) to node[below] {$\actsil$} (l21);       
  \draw[sync] (l22) to [bend left] node[below]
       {$\styleSync{\send\actchooseR}$} (l20);
  \node[label] at (2.0,\rmodtop - 2.2) (t2) {$\exright$};       

\end{tikzpicture}
  \caption{A simple tree synchronisation topology $\syncn_x$.}\label{fig:simpleltsnet}
\end{figure}
\Cref{fig:simpleltsnet} presents a small tree $\snt$ $\syncn_x$ with the root $\exroot$ and two children $\exleft$ and $\exright$.
The auxiliary symbols $?/!$
are syntactic sugar, used to distinguish between $\upacts$ and $\downacts$.
Here, $\upacts(\exroot) = \emptyset$, $\downacts(\exroot) = \{\actopen, \actchooseL, \actchooseR\}$,
and $\locacts(\exroot) = \{\actidle\}$.
Similarly, $\upacts(\exleft) = \{\actopen\}$, $\upacts(\exright) = \{\actchooseL, \actchooseR\}$,
$\downacts(\exleft) = \locacts(\exleft) = \downacts(\exright) = \emptyset$,
and $\locacts(\exright) = \actsil$.
All the components of the model are live-reset.
\end{example}

Let $\syncn = \langle \tnet, \synce \rangle$ be a \snt.
For each $\excomp\in\tnet$ by $\syncn_{\excomp}$ we denote
the \snt induced by the subtree of $\syncn$ rooted in $\excomp$.
Let 
$M\subseteq \tnet$
and $\run*$ be a prefix of some $run\in\runs(\syncn)$ s.t. $\run* = \loc^0\act^0\loc^1\act^1\ldots$
By $\projdown{\run*}(M)$ we denote the projection of $\run*$ to the the product of components in $M$,
\ie the result
of transforming $\run*$ by (1) firstly projecting each $\loc^i$ on the components in $M$;
(2) secondly, removing the actions that do not belong to $M$, together with
their sources.

\lp{Removing sources has to be explained, also why not destinations.}
\mk{TODO - haven't come up with anything yet.}

\begin{example}\label{example:projection}
Consider a sequence:
\begin{align*}
\eta = & \;(\rtloc_0,\lloc_0,\rloc_0)\actsil(\rtloc_0,\lloc_0,\rloc_1)
         \actsil(\rtloc_0,\lloc_0,\rloc_2)\actopen(\rtloc_1,\lloc_0,\rloc_2)\\
         & \actchooseR(\rtloc_4,\lloc_0,\rloc_0)\actsil(\rtloc_4,\lloc_0,\rloc_1)\actchooseL
         (\rtloc_0,\lloc_0,\rloc_0).
\end{align*}
Here, we have $\projdown{\run}(\exroot,\exleft) =
(\rtloc_0,\lloc_0)\actopen(\rtloc_1,\lloc_0)\actchooseR\\
(\rtloc_4,\lloc_0)\actchooseL(\rtloc_0,\lloc_0)$.

\end{example}


\section{Reducing Live-Reset Trees}
\label{sec:live-reset}

In this section we show how to create for a given synchronisation topology $\syncn$ of live-reset components
an $\lts$ that preserves reachability.
The procedure is presented in two steps. Firstly, we show how to build an $\lts$ for two-level trees.
Secondly, we show how to modify the former to deal with trees of arbitrary height in a bottom-up manner.

\subsection{Reduction for Two-level Trees}

Throughout this subsection let $\syncn$
be a live-reset tree \snt with components
$\tnet = \{\exroot, \excomp_1, \ldots, \excomp_n\}$
s.t. $\roottr(\syncn) = \exroot$ and $\chld(\exroot) = \{\excomp_1, \ldots, \excomp_n\}$.
Moreover, let 
$\exroot = \langle \states_\exroot, \initst_\exroot, \acts_\exroot, \transrelg_\exroot, \labeling_\exroot\rangle$
and $\excomp_i = \langle \states_i, \initst_i, \acts_i, \transrelg_i, \labeling_i\rangle$, for $i\in\{1,\ldots,n\}$.
We employ the observations on the nature of synchronisations with live-reset components
in the following definition.

\begin{definition}[Unreduced Sum-of-squares Product]
  Let $\sq^u(\syncn) = \langle \states^u_{sq}, \initstsq,\\ \acts_{sq}, \transrelg_{sq}, \labeling_{sq}\rangle$
  be an $\lts$ s.t.:
  \begin{itemize}
  \item $\states^u_{sq} = \bigcup_{i=1}^n\excomp_i\times\exroot$.
  \item $\initstsq\not\in\states^u_{sq}$ is a fresh initial state.
  \item $\acts_{sq} = \getacts(\syncn) \cup \{\epsilon\}$, where $\epsilon\not\in\getacts(\syncn)$ is a fresh,
    silent action.
  \item The transition relation $\transrelg_{sq}$ is defined as follows:
    \begin{itemize}
    \item $\initstsq\transcr{\epsilon}(\initst_i, \initst_\exroot)$, for all $i\in\{1,\ldots,n\}$;
      intuitively,
      using the new initial state of $\sq(\syncn)$ and $\epsilon$-transitions we can visit the initial
      state of any square product $\excomp_i\times\exroot$.
    \item If $\loc_i\transci{\act}\loc'_i$ and $\act\in\locacts(\excomp_i)$, then
      $(s_i, \loc_\exroot)\transcr{\act}(\loc'_i, \loc_\exroot)$, for each $\loc_\exroot\in\states_\exroot$;
      similarly, if $\loc_\exroot\transcrt{\act}\loc'_\exroot$ and $\act\in\locacts(\exroot)$, then
      $(\loc_i, \loc_\exroot)\transcr{\act}(\loc_i, \loc'_\exroot)$, for each $\loc_i\in\states_i$.
      Thus, the square products are fully asynchronous over local actions.
    \item If $\act\in\upacts(\excomp_i)$, $\loc_i\transci{\act}\initst_i$, and $\loc_\exroot\transcrt{\act}\loc'_\exroot$,
      then $(\loc_i, \loc_\exroot)\transcr{\act}(\initst_j, \loc'_\exroot)$, for all $j\in\{1,\ldots,n\}$. Intuitively,
      after synchronising with $\exroot$, a component $\excomp_i$ will reset and can release control to another
      module.
    \end{itemize}
  \item $\labeling_{sq}(\loc_i, \loc_\exroot) = \labeling_\exroot(\loc_i) \cup \labeling_\exroot(\loc_\exroot)$, for each $(\loc_i, \loc_\exroot)\in\states_{sq}$.
  \end{itemize}
We call $\sq^u(\syncn)$ the Unreduced Sum-of-squares Product of $\syncn$.
\end{definition}

We say that a state $\loc$ of $\syncn$ is \emph{locked} iff there is no run $\run\in\runs(\syncn,\loc)$
s.t. $\run = \loc^0\act^0\loc^1\act^1\ldots$ with $\act^i\in\getacts(\exroot)$, where $\loc^0 = \loc$, for some $i\in\nats$.
\lp{The starting state of $\run$ should be $\loc$ and not $\loc^0$. It is not clear to me if the actions are local or downstream. I understand $\loc$ is in fact a deadlock. If so, then don't name it something else.}
\mk{Fixed the starting state. Is it a deadlock if a child component can still execute actions but the root will never be able to?}
Observe that from a point of view of the root, a locked state is in a full deadlock.
The set of locked states of an $\lts$ can be computed in polynomial time using either a model checker or
conventional graph algorithms.

\begin{definition}[Sum-of-squares Product]
We call the Sum-of-squares Product
$\sq(\syncn)$ of $\syncn$ the result of removing
all the locked states from $\sq^u(\syncn)$
and restricting the relevant transition and labelling functions.
\end{definition}

\begin{figure}[ht]
  \centering  

\begin{tikzpicture}
  [every state/.style={thick,draw,minimum size=6mm}, >=stealth',node distance=1.8cm,->, transform shape, scale = 0.73]

  \def\rootright{1.5}
  \def\roottop{2.3}
  \def\rootbot{-3.8}  
  \def\rootleft{-3.0}  
  \draw[rounded corners, fill=orange!15] (\rootleft, \rootbot) rectangle (\rootright, \roottop) {};

  \def\rmodright{7.2}
  \def\rmodtop{3.5}
  \def\rmodbot{-7.0}  
  \def\rmodleft{2.2}  
  \draw[rounded corners, fill=purple!15] (\rmodleft, \rmodtop+1.1) rectangle (\rmodright, \rmodbot) {};

  \node[state] at (\rootleft+1.5,0) (l0) {$\lloc_0\rtloc_0$};
  \node[state] at (\rootleft+1.5,-1.5) (l2) {$\lloc_0\rtloc_3$};
  \node[state] at (\rootleft+1.5,-3.0) (l3) {$\lloc_0\rtloc_2$};
  \node[state,fill=red!50] at (\rootleft+3.5,0) (l4) {$\lloc_0\rtloc_1$};
  \node[state,fill=red!50] at (\rootleft+3.5,1.5) (l5) {$\lloc_0\rtloc_4$};  

  \draw[sync] (l3) to node[left]
       {$\actopen$} node[below] {} (l2);
  \draw[sync] (l0) to node[above]
       {$\actopen$} (l4);       
  \draw[nosync] (l2) to [loop left] node[above=2mm] {$\actidle$} (l2);

  \node[state] at (\rmodleft+1,\rmodtop-1.2) (l10) {$\rloc_0\rtloc_4$};
  \node[state] at (\rmodleft+1,\rmodtop-2.7) (l11) {$\rloc_2\rtloc_1$};
  \node[state] at (\rmodleft+1,\rmodtop-4.2) (l12) {$\rloc_0\rtloc_1$};
  \node[state] at (\rmodleft+1,\rmodtop-5.7) (l13) {$\rloc_1\rtloc_1$};

  \node[state,fill=red!50] at (\rmodleft+1,\rmodtop-7.5) (l14) {$\rloc_0\rtloc_2$};
  \node[state,fill=red!50] at (\rmodleft+2.5,\rmodtop-7.5) (l15) {$\rloc_1\rtloc_2$};
  \node[state,fill=red!50] at (\rmodleft+4,\rmodtop-7.5) (l16) {$\rloc_2\rtloc_2$};        

  \node[state] at (\rmodleft+1,\rmodtop-8.8) (l17) {$\rloc_0\rtloc_3$};
  \node[state] at (\rmodleft+2.5,\rmodtop-8.8) (l18) {$\rloc_1\rtloc_3$};
  \node[state] at (\rmodleft+4,\rmodtop-8.8) (l19) {$\rloc_2\rtloc_3$};

  \node[state] at (\rmodleft+1,\rmodtop+0.3) (l110) {$\rloc_1\rtloc_4$};

  \node[state,fill=red!50] at (\rmodleft+4,\rmodtop+0.3) (l111) {$\rloc_2\rtloc_4$};    
  \node[state,fill=red!50] at (\rmodleft+4,\rmodtop-1.8) (l112) {$\rloc_0\rtloc_0$};
  \node[state,fill=red!50] at (\rmodleft+4,\rmodtop-3.3) (l113) {$\rloc_1\rtloc_0$};
  \node[state,fill=red!50] at (\rmodleft+4,\rmodtop-4.8) (l114) {$\rloc_2\rtloc_0$};  
  
  \draw[nosync] (l12) to node[left] {$\actsil$} (l13);
  \draw[sync] (l13) to node[right] {$\actchooseL$} (l14);
  \draw[nosync] (l13) to [bend right=40] node[right] {$\actsil$} (l11);    
  \draw[sync] (l11) to node[right] {$\actchooseR$} (l10);
  
  \draw[nosync] (l14) to node[above] {$\actsil$} (l15);
  \draw[nosync] (l15) to node[above] {$\actsil$} (l16);
  \draw[nosync] (l17) to node[above] {$\actsil$} (l18);
  \draw[nosync] (l18) to node[above] {$\actsil$} (l19);
  \draw[nosync] (l17) to [loop below] node[below] {$\actidle$} (l17);
  \draw[nosync] (l18) to [loop below] node[below] {$\actidle$} (l18);  
  \draw[nosync] (l19) to [loop below] node[below] {$\actidle$} (l19);  
  \draw[nosync] (l10) to node[right] {$\actsil$} (l110);

  \draw[sync] (l110) to node[above=3mm] {$\actchooseL$} (l112);

  \draw[nosync] (l112) to node[right] {$\actsil$} (l113);
  \draw[nosync] (l113) to node[right] {$\actsil$} (l114);
  \draw[nosync] (l110) to node[above] {$\actsil$} (l111);  

  \draw[outersync] (l110) to [bend right] node[left=1mm] {$\actchooseL$} (l0);
  \draw[outersync] (l11) to node[above=2mm] {$\actchooseR$} (l5);
  \draw[outersync] (l13) to node[above=2mm] {$\actchooseL$} (l3);
  \draw[outersync] (l3) to [bend right] node[below=1mm] {$\actopen$} (l17);
  \draw[outersync] (l0) to [bend right] node[below=1mm] {$\actopen$} (l12);            

  \node[state,fill=green!15] at (\rootleft+2.4, 4.2) (fin) {$\initst_{sq}$};
  \draw[sync,style=dotted] (fin) to node[above] {$\epsilon$} (l112);
  \draw[sync,style=dotted] (fin) to [bend right] node[left] {$\epsilon$} (l0);  
  
\end{tikzpicture}
  \caption{The unreduced sum-of-squares product of a simple tree \snt $\syncn_x$.}\label{fig:simpleltsnetsq}
\end{figure}

\begin{figure}[!t]
  \centering  

\begin{tikzpicture}
  [every state/.style={thick,draw,minimum size=6mm}, >=stealth',node distance=1.8cm,->, transform shape, scale = 0.73]

  \def\rootright{1.5}
  \def\roottop{2.3}
  \def\rootbot{-3.8}  

  \def\rmodright{7.2}
  \def\rmodbot{-7.0}  

  \node[state] at (-1.5,0.8) (l0) {$\lloc_0\rtloc_0$};
  \node[state] at (4.2,-3.0) (l2) {$\lloc_0\rtloc_3$};
  \node[state] at (2.2,-3.0) (l3) {$\lloc_0\rtloc_2$};

  \draw[sync] (l3) to node[below=2mm]
       {$\actopen$} (l2);
  \draw[nosync] (l2) to [loop below] node[below] {$\actidle$} (l2);

  \node[state] at (2.2,1.7) (l10) {$\rloc_0\rtloc_4$};
  \node[state] at (4.2,0.8) (l11) {$\rloc_2\rtloc_1$};
  \node[state] at (0.5,-0.7) (l12) {$\rloc_0\rtloc_1$};
  \node[state] at (2.2,-0.7) (l13) {$\rloc_1\rtloc_1$};

  \node[state] at (0.1,-3.0) (l17) {$\rloc_0\rtloc_3$};
  \node[state] at (-1.5,-3.0) (l18) {$\rloc_1\rtloc_3$};
  \node[state] at (-3,-3.0) (l19) {$\rloc_2\rtloc_3$};

  \node[state] at (0.5,1.7) (l110) {$\rloc_1\rtloc_4$};

  \draw[nosync] (l12) to node[below] {$\actsil$} (l13);
  \draw[nosync] (l13) to [bend right] node[below=1mm] {$\actsil$} (l11);    
  \draw[sync] (l11) to [bend right] node[above=2mm] {$\actchooseR$} (l10);
  
  \draw[nosync] (l17) to node[above] {$\actsil$} (l18);
  \draw[nosync] (l18) to node[above] {$\actsil$} (l19);
  \draw[nosync] (l17) to [loop below] node[below] {$\actidle$} (l17);
  \draw[nosync] (l18) to [loop below] node[below] {$\actidle$} (l18);  
  \draw[nosync] (l19) to [loop below] node[below] {$\actidle$} (l19);  
  \draw[nosync] (l10) to node[above] {$\actsil$} (l110);

  \draw[outersync] (l110) to [bend right] node[above=2mm] {$\actchooseL$} (l0);
  \draw[outersync] (l13) to node[right] {$\actchooseL$} (l3);
  \draw[outersync] (l3) to node[below=2mm] {$\actopen$} (l17);
  \draw[outersync] (l0) to [bend right] node[below=2mm] {$\actopen$} (l12);            

  \node[state,fill=green!15] at (-3,-1) (fin) {$\initst_{sq}$};
  \draw[sync,style=dotted] (fin) to [bend right] node[below] {$\epsilon$} (l0);
  
\end{tikzpicture}
  \caption{The sum-of-squares product of a simple tree \snt $\syncn_x$.}\label{fig:simpleltsnetsqr}
\end{figure}

\begin{example}\label{example:simpleltsnetsq}
\cref{fig:simpleltsnetsq} presents the unreduced sum-of-squares product $\sq^u(\syncn_x)$ for
the small tree \snt from Example~\ref{example:simpleltsnet}.
The locked states are coloured red. 
Fig.~\ref{fig:simpleltsnetsqr} displays the sum-of-squares product
$\sq(\syncn_x)$ 
of the topology. Note the similarity of the model to the root of Fig.~\ref{fig:simpleltsnet}
that reveals that the children do not restrict the root's
freedom.

\end{example}

As shown in the above example, it is possible that the size of the state space of an (unreduced) sum-of-squares product
of a live tree \snt $\syncn$ is equal to or greater than the size of the state space of $\syncn$.
On the other hand, the size of a representation of a state will be smaller in (unreduced) sum-of-squares product,
as it records only local states of at most two components of the network.
However, in less degenerate cases than our toy model we can expect significant reductions.
In particular, if a two-level tree \snt contains $n$ components,
where the statespace of each is of size $m$, then the size of its asynchronous product
can reach $m^n$. In contrast, the size of the (unreduced) sum-of-squares product
of such topology is at most $(n-1)\cdot m^2$.
The structure of the sum-of-squares is similar to the structure of the root. This construction
preserves reachability, but not the $EG$ modality of $\CTL$, as shown in
Proposition~\ref{theorem:sqnpreservG}.

\begin{figure}[ht]
  \centering

\begin{tikzpicture}
  [every state/.style={thick,draw,minimum size=6mm}, >=stealth',node distance=1.8cm,->, transform shape, scale = 0.73]

  \def\rootleft{-2.0}
  \def\roottop{1.5}
  \def\rootright{\rootleft+7}
  \def\rootbot{-1.0}  
  \draw[rounded corners, fill=green!15] (\rootleft, \rootbot) rectangle (\rootright, \roottop) {};

  \def\lmodleft{-3.0}    
  \def\lmodright{\lmodleft + 3.8}
  \def\lmodtop{-2.3}
  \def\lmodbot{-4.8}  
  \draw[rounded corners, fill=orange!15] (\lmodleft, \lmodtop) rectangle (\lmodright, \lmodbot) {};

  \def\rmodleft{2.0}    
  \def\rmodright{\rmodleft + 3.8}
  \def\rmodtop{-2.3}
  \def\rmodbot{-4.8}  

  \draw[rounded corners, fill=purple!15] (\rmodleft, \rmodtop) rectangle (\rmodright, \rmodbot) {};

  \draw[-,red,thick,dashed] ({\rootleft+\rootright)/2},\rootbot) -- (-2.0, \rmodtop);
  \draw[-,red,thick,dashed] ({\rootleft+\rootright)/2},\rootbot) -- (4.5, \rmodtop);
  
  \node[state,initial] at (\rootleft+1,0) (l0) {$\rtloc_0$};
  \node[state] at (\rootleft+3.5,0) (l1) {$\rtloc_1$};
  \node[state,label=below:{$p$}] at (\rootleft+6.0,0) (l2) {$\rtloc_2$};

  \draw[sync] (l0) to [bend left] node[above]
       {$\styleSync{\receive\actchooseR}$} node[below] {} (l1);
  \draw[sync] (l1) to [bend left] node[above]
       {$\styleSync{\receive\actchooseL}$} (l2);
  \draw[nosync] (l2) to [loop above] node[above]
       {$\actidle$} (l2);
  \node[label] at ({(\rootleft+\rootright)/2},\rootbot+0.3) (t2) {$\exrootn$};

  \node[state,initial] at (\lmodleft+1,{(\rmodtop+\rmodbot)/2}) (l10) {$\lloc_0$};
  \node[state,label=below:{$p$}] at (\lmodleft+3,{(\lmodtop+\lmodbot)/2}) (l11) {$\lloc_1$};
  \draw[sync] (l11) to [bend right] node[above]
       {$\styleSync{\send\actchooseL}$} (l10);
  \draw[nosync] (l10) to node[below] {$\actsil$} (l11);       
  \node[label] at ({(\lmodleft+\lmodright)/2},\rmodtop - 2.2) (t2) {${\exleftn}$};  

  \node[state,initial,label=below:{$p$}] at (\rmodleft+1,{(\rmodtop+\rmodbot)/2}) (l20) {$\rloc_0$};
  \node[state] at (\rmodleft+3,{(\rmodtop+\rmodbot)/2}) (l21) {$\rloc_1$};
  \draw[sync] (l21) to [bend right] node[above]
       {$\styleSync{\send\actchooseR}$} (l20);
  \draw[nosync] (l20) to node[below] {$\actsil$} (l21);       
  \node[label] at ({(\rmodleft+\rmodright)/2},\rmodtop - 2.2) (t2) {${\exrightn}$};       

\end{tikzpicture}
  \caption{Sum-of-squares does not preserve EG.}\label{fig:noeg}
\end{figure}
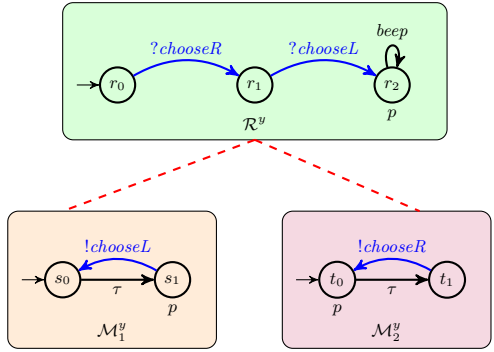

\begin{theorem}[Sum-of-squares Preserves Reachability]\label{theorem:sqpreservF}
Let $\syncn$ be a live two-level tree \snt.
For each $p\in\props$
$\syncn\models EFp$ iff
$\sq(\syncn)\models EFp$.
\end{theorem}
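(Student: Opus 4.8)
The plan is to prove the two inclusions separately: $\sq(\syncn)\models EFp\Rightarrow\syncn\models EFp$ is the routine soundness direction, whereas $\syncn\models EFp\Rightarrow\sq(\syncn)\models EFp$ carries the real content and rests on a run-decomposition lemma realising the rewriting intuition from the introduction. Throughout, I identify a state of $\syncn$ with a tuple $(\loc_\exroot,\loc_1,\dots,\loc_n)$, and a non-initial state of $\sq(\syncn)$ with a pair $(\loc_i,\loc_\exroot)\in\getstates(\excomp_i)\times\getstates(\exroot)$; by definition $\model\models EFp$ holds iff some state whose label contains $p$ is reachable from the initial state of $\model$, so in each direction it suffices to transfer a finite reachability witness.

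For soundness, fix a finite path of $\sq(\syncn)$ starting at $\initstsq$ --- necessarily of the form $\initstsq\transcr{\epsilon}(\initst_{i_1},\initst_\exroot)$ followed by local steps inside square products and reset transitions --- ending at a state $(\loc_i,\loc_\exroot)$ whose label $\labeling_i(\loc_i)\cup\labeling_\exroot(\loc_\exroot)$ contains $p$. I would replay it in $\syncn$ under the map sending a $\sq$-state $(\loc_j,\loc_\exroot)$ to the $\syncn$-state in which $\exroot$ sits at $\loc_\exroot$, $\excomp_j$ sits at $\loc_j$, and every other child sits in its initial state. A local step of $\exroot$ or of the tracked child $\excomp_j$ is matched by the same transition of $\syncn$ (all other children idling); the $\epsilon$-step is matched by staying in the initial state of $\syncn$; and a reset step $(\loc_j,\loc_\exroot)\transcr{\act}(\initst_{j'},\loc'_\exroot)$ with $\act\in\upacts(\excomp_j)$ is matched by the synchronisation of $\exroot$ and $\excomp_j$ over $\act$ --- by the live-reset property $\excomp_j$ then sits in $\initst_j$, while $\excomp_{j'}$ already sits in $\initst_{j'}$, so the invariant ``every child but the tracked one is in its initial state'' is re-established with the focus moved to $\excomp_{j'}$. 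An easy induction on the length of the $\sq$-path establishes this invariant, so the simulating $\syncn$-path ends in $(\loc_\exroot,\initst_1,\dots,\loc_i,\dots,\initst_n)$, whose label contains $p$; hence $\syncn\models EFp$. Locked-state removal is not used here, so the same argument also gives $\sq^u(\syncn)\models EFp\Rightarrow\syncn\models EFp$.

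For the converse, fix a finite path $\pi=\loc^0\act^0\loc^1\cdots\loc^m$ of $\syncn$ from its initial state $\loc^0$ with $p\in\labeling(\loc^m)$, so that $p\in\labeling_\exroot(\loc^m_\exroot)$ or $p\in\labeling_k(\loc^m_k)$ for some child $\excomp_k$. The crux is a normal-form lemma: $\pi$ can be rewritten into a path $\pi'$ of $\syncn$ that still reaches a state whose label contains $p$ and that splits into consecutive \emph{phases}, in each of which exactly one child is active --- performing local actions, possibly interleaved with local actions of $\exroot$ --- each phase but the last ending with a single synchronisation between that child and $\exroot$, after which the child resets. I would prove this by the standard commutation properties of the asynchronous product: local actions of distinct children, and a local action of a child with any transition not involving that child, commute, so each maximal block of a child's local actions can be slid forward until it abuts the first synchronisation that child takes part in, at which moment the child is in its initial state (by a previous reset, or because it has not moved yet). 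One then deletes all child-local actions not needed to enable one of these synchronisations; this leaves $\loc^m_\exroot$ untouched, and in the child-witnessed case one further arranges the final phase to consist exactly of the local actions leading $\excomp_k$ from $\initst_k$ to $\loc^m_k$ (these exist since $\excomp_k$ reached $\loc^m_k$ from its last reset using only local actions).

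Given $\pi'$ in normal form, I would assemble a path of $\sq(\syncn)$ phase by phase: take $\initstsq\transcr{\epsilon}(\initst_{i_1},\initst_\exroot)$ for the child $\excomp_{i_1}$ of the first phase; within phase $t$ replay the local actions of $\exroot$ and of $\excomp_{i_t}$ (independent in $\sq(\syncn)$, so any order works) to reach some $(\loc_{i_t},r_t)$, then fire the reset transition $(\loc_{i_t},r_t)\transcr{\act_t}(\initst_{i_{t+1}},r'_t)$ corresponding to the phase's synchronisation, picking $\excomp_{i_{t+1}}$ as the child of the next phase. The final $\sq$-state then has $p$ in its label, by the labelling computation already used for soundness. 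The point that remains --- and the one I expect to be the main obstacle --- is to check that this path never leaves $\sq(\syncn)$, i.e.\ never visits a locked state: every state visited strictly before the end is followed later in the replay by a move of $\exroot$ (a root-local action of the current phase, the synchronisation enabled at its end, or a synchronisation in a later phase), hence is not locked; for the terminal state one relocates the witness onto the ``clean'' configuration $(\loc^m_\exroot,\initst_1,\dots)$, resp.\ $(\initst_\exroot,\dots,\loc^m_k,\dots)$, and argues that it is retained --- or else stops earlier at the last non-locked $\sq$-state along the replay, which still carries $p$ since $p$ holds from the moment the witnessing component enters its final local state. Making this last step, together with the commutation arguments behind the normal-form lemma, fully precise is where the bulk of the work lies; the rest is bookkeeping.
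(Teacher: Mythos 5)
Your proposal follows essentially the same route as the paper's proof: the hard direction rests on reordering the witnessing run into phases in which only the root and the child owning the next synchronisation act locally (justified by commutation of independent actions) and then projecting each phase onto the corresponding square $\excomp_i\times\exroot$, while the easy direction embeds each square state into the global state of $\syncn$ with all other children at their initial states. The only substantive difference is that you explicitly flag and sketch a treatment of the locked-state removal (checking that the replayed path stays inside $\sq(\syncn)$ rather than merely inside $\sq^u(\syncn)$), a point the paper's proof passes over in silence, so your version is, if anything, the more careful of the two.
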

\begin{proof}
  Recall that we assume $\tnet = \{\exroot, \excomp_1, \ldots, \excomp_n\}$
  with root $\exroot$ and children $\{\excomp_i\}_{i=1}^n$.
  Let $\syncn\models EFp$ and $\run = \loc^0\act^0\loc^1\act^1\ldots$
  be a run of $\syncn$ s.t. $p\in\labeling(\loc_i)$ for some $i\in\nats$.
  Now, $\run$ can be represented as $\run = \alpha_1 F_1 \alpha_2 F_2 \ldots$,
  where for each $i\in\nats$ there exist $j,k\in\nats$ such that
  $\alpha_i = \loc^j\act^j\ldots\loc^k\act^k\loc^{k+1}$
  and
  $\act^j,\ldots,\act^k\in\locacts(\exroot)\cup\bigcup_{i=1}^n\locacts(\excomp_i)$
  and $F_i\in\downacts(\exroot)$.
  Actions are never synchronised between children, thus it can be proven by
  induction on the length of the run that the actions in 
  $\run$ can be reordered to obtain a run $\run'\in\runs(\syncn, \loc^0)$
  that can be represented as
  $\run' = \alpha'_1 F_1 \alpha'_2 F_2 \ldots$, such that:
  \begin{enumerate}
  \item For any $i\in\nats$ there exist $j,k\in\nats$ such that
  $\alpha'_i = \loc'^j\act'^j\ldots\loc'^k\act'^k\loc'^{k+1}$
  and $\act^j,\ldots,\act^k\in\locacts(\exroot)\cup\locacts(\snd(F_i))$.
  \item  For each $i\in\nats$ and $1\le j\le n$ we have
  $\projdown{\alpha_i}(\exroot, \snd(F_i)) = \projdown{\alpha'_i}(\exroot, \snd(F_i))$.
  \item For each $\loc'^j$ in $\alpha_i$, if $0$ is the coordinate of root and
    $k$ is the coordinate of $\snd(F_i)$, then $\loc'^j = (\loc_{0},\initst_{1},\ldots,\initst_{k-1},\loc_{k},\initst_{k+1},\ldots)$
    for some $\loc_0\in\getstates(\exroot)$, $\loc_k\in\getstates(\snd(F_i))$.
  \end{enumerate}
  Intuitively, $\run'$ is built from $\run$ in such a way that firstly
  only the root and the component that synchronises with the root over $F_1$
  are allowed to execute their local actions while all the other components
  stay in their initial states; then $F_1$ is fired;
  and then this scheme is repeated for $F_2, F_3$, etc.
  We can now project $\run'$ on spaces of squares of the root and components
  active in a given interval, to obtain
  $\run'' = \projdown{\alpha'_i}(\exroot, \snd(F_1)) F_1 \projdown{\alpha'_i}(\exroot, \snd(F_2)) F_2 \ldots$
  As $\run''\in\sq(\syncn)$ and it can be observed that $\run''$ visits each local state
  that appears along $\run$, this part of the proof is concluded.

  Let $\sq(\syncn)\models EFp$ and $\run\in\runs(\sq(\syncn))$ visit a state labelled with $p$.
  Now, it suffices to replace in $\run$ each state $(\loc_k,\loc_0)$ that belongs
  to the square $\excomp_k\times\exroot$ with the global state
  $(\loc_{0},\initst_{1},\ldots,\initst_{k-1},\loc_{k},\initst_{k+1},\ldots)$  
  of $\syncn$. The result of this substitution is a run of $\syncn$ that visits $p$.
\end{proof}

\begin{proposition}[Sum-of-squares Does Not Preserve $EG$]\label{theorem:sqnpreservG}
There exists a live two-level tree \snt $\syncn$ s.t.
for some $p\in\props$
$\syncn\models EGp$ and
$\sq(\syncn)\not\models EGp$. 
\end{proposition}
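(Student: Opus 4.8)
The plan is to take as the witness the two-level topology $\syncn$ of Fig.~\ref{fig:noeg}, with root $\exrootn$, leaves $\exleftn$ and $\exrightn$, and the single proposition $p$ marked there. First I would check the hypotheses: the only actions shared between components are $\actchooseL$ (between $\exleftn$ and $\exrootn$) and $\actchooseR$ (between $\exrightn$ and $\exrootn$), so the synchronisation graph is a tree of height one, and it is live-reset because each leaf has exactly one upstream transition, $\lloc_1\transc{\send\actchooseL}\lloc_0$ resp.\ $\rloc_1\transc{\send\actchooseR}\rloc_0$, both returning to the corresponding initial state. Note that a global state $(\rtloc,\lloc,\rloc)$ of $\syncn$ satisfies $p$ exactly when $\rtloc=\rtloc_2$, or $\lloc=\lloc_1$, or $\rloc=\rloc_0$.

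To establish $\syncn\models EGp$ I would exhibit the run
\begin{align*}
\run = & \;(\rtloc_0,\lloc_0,\rloc_0)\,\actsil\,(\rtloc_0,\lloc_1,\rloc_0)\,\actsil\,(\rtloc_0,\lloc_1,\rloc_1)\,\actchooseR\,(\rtloc_1,\lloc_1,\rloc_0)\,\actchooseL\\
& (\rtloc_2,\lloc_0,\rloc_0)\,\actidle\,(\rtloc_2,\lloc_0,\rloc_0)\,\actidle\,\cdots
\end{align*}
It is legal in $\syncn$ (the first two steps are local moves of the two leaves, then $\actchooseR$ synchronises $\exrightn$ with $\exrootn$, then $\actchooseL$ synchronises $\exleftn$ with $\exrootn$, and finally $\exrootn$ loops forever on $\actidle$), and each visited state carries $p$: the right leaf holds $p$ in $\rloc_0$ while the left leaf moves up to $\lloc_1$, then the left leaf holds $p$ in $\lloc_1$ while the right leaf idles through $\rloc_1$ and resets, and from then on the root sits in $\rtloc_2$. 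The decisive feature is that the two children \emph{cover for each other}: at no point are both of them in a $p$-free local state while the root is outside $\rtloc_2$. Hence $\syncn\models EGp$.

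To show $\sq(\syncn)\not\models EGp$ I would inspect the sum-of-squares product, whose states are the pairs in $\exleftn\times\exrootn$ and $\exrightn\times\exrootn$ plus the fresh $\initstsq$. A short computation of the locked states shows in particular that $(\lloc_0,\rtloc_0)$ is locked: the left leaf cannot fire $\actchooseR$, so in this square the root never leaves $\rtloc_0$, and the only move of the pair is the left leaf's $\actsil$-step to the deadlock $(\lloc_1,\rtloc_0)$, so no action of $\exrootn$ is ever taken; hence $(\lloc_0,\rtloc_0)$ is removed. Consequently the only $\epsilon$-successor of $\initstsq$ that survives is $(\rloc_0,\rtloc_0)$, whose unique successor is $(\rloc_1,\rtloc_0)$. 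Since $\labeling_{sq}(\initstsq)=\emptyset$ and $\labeling_{sq}(\rloc_1,\rtloc_0)=\labeling_{\exrightn}(\rloc_1)\cup\labeling_{\exrootn}(\rtloc_0)=\emptyset$, the proposition $p$ fails already at the initial state $\initstsq$ (and again at $(\rloc_1,\rtloc_0)$), whence $\sq(\syncn)\not\models EGp$. The one place that needs real care is the bookkeeping of the square products and, above all, the set of locked states, since it is exactly the removal of $(\lloc_0,\rtloc_0)$ — together with the fact that the $\exrightn$-square on its own cannot keep $p$ true once $\exrightn$ leaves $\rloc_0$ — that collapses the mutual cover a single square product cannot represent.
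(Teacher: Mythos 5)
Your proof is correct and follows essentially the same route as the paper: it uses the witness topology of Fig.~\ref{fig:noeg}, exhibits the same run (the left leaf moves to $\lloc_1$ before the right leaf leaves $\rloc_0$, so the two children cover for each other) to establish $\syncn\models EGp$, and then shows that the sum-of-squares cannot reproduce this mutual cover, which is precisely the paper's point about the impossibility of partitioning such a run into per-child intervals. Your explicit computation of the locked states is accurate, and it is important that you do not rest the refutation solely on $\labeling_{sq}(\initstsq)=\emptyset$ (which would make the proposition vacuous for every net and every $p$) but also observe that $p$ genuinely fails at the unavoidable state $(\rloc_1,\rtloc_0)$.
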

\begin{proof}

Consider the tree \snt $\syncn_y$ in Fig.~\ref{fig:noeg}.
Here, we have $\syncn_y\models EGp$, but each path $\run$ along which $p$ holds
globally, starts with $\exleftn$ executing $\actsil$ followed by
$\exrightn$ executing $\actsil$ and, consecutively, $\actchooseR$.
Thus, it is not possible to partition $\run$ into intervals where
one child executes local actions until synchronisation with the
root and possible release of control to another child.
Hence, $\sq(\syncn_y)\not\models EGp$.  
\end{proof}

\lp{I agree. The reason is that the sum of squares does not mimic the interleaving of local actions. Couldn't a reordering of local actions of children suffice?}
\mk{I don't understand...}

\subsection{Adaptation for Any Tree Height}

It is rather straightforward to adapt the sum-of-squares of a subtree 
to allow for synchronisation of the root with its parent.
By $\cmpl(\sq(\syncn))$ we denote the result of replacing in $\sq(\syncn)$
every transition $(\loc,\act,\loc')$, where $\act\in\upacts(\exroot)$ with
$(\loc,\act,\initstsq)$.
Note that $\cmpl(\sq(\syncn))$ is a live-reset tree \snt.

\begin{example}\label{example:syncanylevel}
To obtain $\cmpl(\sq(\syncn_x))$ for the sum-of-squares product from 
Fig.~\ref{fig:simpleltsnetsqr} move the targets of the looped
$\actidle$ transitions to $\initstsq$.
\end{example}

We are now ready to provide the algorithm for reducing any
live tree \snt to a single component
while preserving reachability.

\begin{algorithm}[!htp]
\caption{$\reduce(\syncn)$}\label{algo:full_reduction}
\textbf{Input: live-reset tree sync. topology $\syncn$}\\
\textbf{Output: }$\lts$ $\model$ s.t. $\syncn\models EFp$ iff
$\model_{\syncn}\models EFp$.\\
\vspace{-3mm}
   \begin{algorithmic}[1]

   \IF{$|\scomp(\syncn)| = 1$}
     \RETURN{$\syncn$ \text{(* $\syncn$ is a leaf *)}}
   \ENDIF

   \STATE{{\bf let }$\reducedchldn := \emptyset$}

   \FOR{$\chldalg\in\chld(\roottr(\syncn))$}
    \STATE $\reducedchldn.append(\reduce(\syncn_{\chldalg}))$
   \ENDFOR

   \STATE{{\bf let }$\syncn' := \{\roottr(\syncn)\} \cup \reducedchldn$}

   \RETURN{$\cmpl(\sq(\syncn'))$}

   \end{algorithmic}
\end{algorithm}

Algorithm~\ref{algo:full_reduction} applies the two-level reduction
$\cmpl(\sq(\cdot))$ to all the nodes of the $\snt$, in a bottom-up manner.
Its soundness and correctness is expressed by the following theorem.

\begin{theorem}[$\reduce(\syncn)$ Preserves Reachability]\label{theorem:reducepreservF}
Let $\syncn$ be a live tree \snt.
For each $p\in\props$
$\syncn\models EFp$ iff
$\reduce(\syncn)\models EFp$.
\end{theorem}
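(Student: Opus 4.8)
The plan is to proceed by induction on the height of the tree $\syncn$, using Theorem~\ref{theorem:sqpreservF} as the base/inductive engine at each level. The key point is that Algorithm~\ref{algo:full_reduction} first recursively reduces each child subtree $\syncn_{\chldalg}$ to a single $\lts$, then forms the two-level topology $\syncn' = \{\roottr(\syncn)\} \cup \reducedchldn$ and applies $\cmpl(\sq(\cdot))$. So the induction step needs two ingredients: (i) that replacing each child subtree by its reduction preserves $EFp$ at the level of the whole topology, and (ii) that the two-level reduction $\cmpl(\sq(\cdot))$ preserves $EFp$ — where crucially the $\cmpl$ modification (redirecting $\upacts(\roottr)$ transitions to $\initstsq$) must not disturb the argument of Theorem~\ref{theorem:sqpreservF}.

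\medskip\noindent
\textbf{Base case.} If $|\scomp(\syncn)| = 1$ then $\syncn$ is a leaf and $\reduce(\syncn) = \syncn$, so the equivalence is trivial. For the inductive step, assume the claim holds for all live tree \snt{}s of smaller height (in particular for each $\syncn_{\chldalg}$ with $\chldalg \in \chld(\roottr(\syncn))$), so that $\syncn_{\chldalg} \models EFp$ iff $\reduce(\syncn_{\chldalg}) \models EFp$, and moreover — and this is the statement I will actually need — the two models are related by a simulation that preserves all reachable propositional labels and reflects, for each action in $\upacts$ of the subtree root, the ability to eventually fire it.

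\medskip\noindent
\textbf{Inductive step, part (i): child replacement.} I would first argue that the topology $\syncn'$ obtained by swapping each $\syncn_{\chldalg}$ for $\reduce(\syncn_{\chldalg})$ satisfies $\syncn \models EFp$ iff $\syncn' \models EFp$. The forward direction: take a run of $\syncn$ witnessing $EFp$; by the same reordering trick used in the proof of Theorem~\ref{theorem:sqpreservF} (actions of distinct children never synchronise), rewrite it so that between two consecutive root-visible actions only the root and one child are active; each maximal child-local segment, together with the terminal $\upacts$-synchronisation, is an execution of $\syncn_{\chldalg}$ (its siblings sit in their initial states, which is consistent since the subtree root of $\syncn_{\chldalg}$ is live-reset and hence resets on that action); apply the inductive hypothesis segment-by-segment to find a matching execution in $\reduce(\syncn_{\chldalg})$ reaching the same root-visible labels, and stitch the segments back together. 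The propositional label $p$ is witnessed either at the root (unchanged) or inside some child; in the latter case the inductive simulation carries the label over to $\reduce(\syncn_{\chldalg})$. The reverse direction is symmetric. Here I must be careful that $p$ may genuinely live on a deep component; the inductive hypothesis must therefore be phrased not merely as equivalence of $EFp$ but as preservation of \emph{all} reachable labels, which is exactly what the substitution argument in the last paragraph of the proof of Theorem~\ref{theorem:sqpreservF} in fact establishes.

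\medskip\noindent
\textbf{Inductive step, part (ii): the two-level reduction with $\cmpl$.} Since $\syncn'$ is now a live two-level tree \snt, Theorem~\ref{theorem:sqpreservF} gives $\syncn' \models EFp$ iff $\sq(\syncn') \models EFp$. It remains to see that $\cmpl$ does not change this: $\cmpl(\sq(\syncn'))$ differs from $\sq(\syncn')$ only by redirecting transitions labelled with $\upacts(\roottr(\syncn'))$ to $\initstsq$. When $\roottr(\syncn')$ is the global root these actions do not exist and $\cmpl$ is the identity, so the top-level call is covered directly by Theorem~\ref{theorem:sqpreservF}. When $\syncn'$ is itself a subtree, the redirected transitions are exactly the reset transitions of the live-reset component, and because $\initstsq$ reaches every square's initial state via $\epsilon$, the reachable label set is unchanged — so $\cmpl(\sq(\syncn'))$ is, as noted in the text, again a live-reset \snt carrying the same labels, ready to serve as a reduced child one level up. Chaining (i) and (ii) closes the induction.

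\medskip\noindent
\textbf{Main obstacle.} I expect the real work to be in formalising the segment-wise application of the inductive hypothesis in part (i): one needs the inductive statement to be strong enough (a label-preserving, $\upacts$-firing-reflecting correspondence between full executions, not just a one-shot $EFp$ equivalence) for the stitching to go through, and one must check that gluing reduced child-segments at their reset points yields a genuine run of $\cmpl(\sq(\syncn'))$ — i.e.\ that the $\epsilon$-transitions from $\initstsq$ allow control to pass to the next active child exactly where the reordering requires. Once the inductive invariant is stated correctly, each individual step reduces to the already-proven Theorem~\ref{theorem:sqpreservF} plus the observation that $\cmpl$ only touches reset edges.
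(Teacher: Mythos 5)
Your proposal follows essentially the same route as the paper, which proves this theorem only as a sketch: induction on the height of the tree, invoking Theorem~\ref{theorem:sqpreservF} at each level and observing that $\cmpl$ only redirects the reset edges of the subtree root. In fact you go further than the paper's argument by making explicit the child-substitution step and the need to strengthen the inductive hypothesis from a one-shot $EFp$ equivalence to a correspondence preserving all reachable labels and reflecting the firability of upstream actions --- a subtlety the paper's sketch leaves implicit --- so your plan is, if anything, more complete than the published proof.
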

\begin{proof}(Sketch)
  The proof follows via induction on the height of the tree $\syncn$.
  As we have Theorem~\ref{theorem:sqpreservF}, it suffices to prove that $\cmpl(\sq(\syncn))$
  preserves reachability for any two-level live $\snt$ $\syncn$.
  This, however, can be done in a way very similar to the proof of Theorem~\ref{theorem:sqpreservF}
  and is omitted.
\end{proof}


\section{Conclusion}
\label{sec:conclusion}

In this paper we have outlined how to simplify large tree networks of automata that reset
after synchronising with their parents. It is shown that the reduction preserves a
certain form of reachability, but it does not preserve safety.
While the procedure is quite fast and effective, it has several limitations.
Firstly, it preserves reachability of labelings, but not their conjunctions;
namely, it is not guaranteed that $\reduce(\syncn)\models EF(p\land q)$
iff $\syncn\models EF(p\land q)$.
Secondly, we would like to relax the assumption that all the components are live-reset automata.
It is not difficult to see how to adapt the original construction to the general case.
To this end it suffices to extend the sum-of-squares product with an explicit model of the memory
of last synchronisations.
Interacting modules can then use this memory to register the \emph{return states}, \ie{} the locations
entered after synchronising action. Thus, a synchronising step between a root and one of its children
would become a process consisting of the following four steps: 
(1) perform joint synchronising transition; (2) record the target locations; (3) make
a non-deterministic selection of a child and read from the memory its return state; (4) continue
the execution of the pair of the root and the new child.
This construction, however, can hinder the expected reduction due to the size of the memory component.
Finally, it is possible that the assumption of tree-like communication between the components
is too strong for any real-life applications. Thus it should be investigated if the
proposed procedures can be easily extended to other topologies.

We plan to address these limitations in future work.



  


\bibliographystyle{IEEEtran}
\bibliography{refs}
\end{document}